\pgfplotsset{compat=1.14}
\newcommand{\cc}[1]{\mathcal{#1}}  % calligraphic
\newcommand{\cb}[1]{\mathbb{#1}}  % doubled (IN, IR, ...)
\newcommand{\csf}[1]{\normalfont{\textsf{#1}}}  % sans serif
\newcommand{\ctt}[1]{{\fontfamily{lmtt}\selectfont #1}}  % typefont
\newcommand{\card}[1]{\vert #1 \vert}  % size of a set
\DeclareMathOperator{\U}{V}  % Universe for variables
\DeclareMathOperator{\cs}{\cc{F}}  % closure system
\DeclareMathOperator{\mt}{\land}  % meet
\DeclareMathOperator{\jn}{\lor}  % join
\DeclareMathOperator{\J}{J}  % join-irreducible set
\DeclareMathOperator{\Hp}{\cc{H}}  % hypergraph 
\DeclareMathOperator{\E}{\cc{E}}  % (hyper)edges
\DeclareMathOperator{\Lt}{\cc{L}}  % lattice
\DeclareMathOperator{\lft}{\normalfont{\csf{left}}}  % left
\DeclareMathOperator{\rht}{\normalfont{\csf{right}}}  % right
\newcommand{\ie}{i.e.,\xspace}
\definecolor{midnight}{RGB}{44, 62, 80}
\definecolor{belize}{RGB}{0, 103, 176}
\definecolor{teal}{RGB}{0, 150, 136}
\definecolor{amethyst}{RGB}{155, 89, 182}
\definecolor{asbestos}{RGB}{127, 140, 141}
\definecolor{clouds}{RGB}{236, 240, 241}
\definecolor{grass}{HTML}{97CE68}
\definecolor{alizarine}{RGB}{231, 76, 60}
\begin{document}

\title{Hierarchical decompositions of dihypergraphs \thanks{ 
		The second author is funded by the CNRS, France, ProFan project.}}

%\title{Pouet Contribution Title\thanks{Supported by organization x.}}
%%
%%\titlerunning{Abbreviated paper title}
%% If the paper title is too long for the running head, you can set
%% an abbreviated paper title here
%%
\author{Lhouari Nourine \inst{1} \and
Simon Vilmin\inst{1}}
%%
%\authorrunning{F. Author et al.}
%% First names are abbreviated in the running head.
%% If there are more than two authors, 'et al.' is used.
%%
\institute{LIMOS, Université Clermont Auvergne, Aubière , France \\
\email{simon.vilmin@ext.uca.fr}\\
\email{lhouari.nourine@uca.fr}}

\maketitle              % typeset the header of the contribution
\begin{abstract}
In this paper we are interested in decomposing a dihypergraph $\Hp = (\U,\E)$ into 
simpler dihypergraphs, that can be handled more efficiently.
%a , a less-known representation for closure systems and lattices.
We study the properties of dihypergraphs that can be hierarchically decomposed into 
trivial  dihypergraphs, \ie vertex hypergraph. The hierarchical decomposition is 
represented by a full labelled  binary tree called $\Hp$-tree, in the fashion of 
hierarchical clustering.
We present a polynomial time and space algorithm to achieve such a decomposition by 
producing  its corresponding $\Hp$-tree.
%
%The aim of this decomposition is to recursively partition a dihypergraph without loss 
%of  informations.
However, there are  dihypergraphs that cannot be completely decomposed into trivial 
components. Therefore, we relax this requirement to more indecomposable dihypergraphs 
called H-factors, and discuss applications of this decomposition to closure systems 
and lattices.

\keywords{Dihypergraphs \and Decomposition \and Closure systems \and Lattices }
\end{abstract}
\section{Introduction}
\label{sec:intro}
%%!TEX root = ./main.tex
%
In this paper we are interested in decomposing  \emph{directed hypergraphs} (\emph{dihypergraphs} for short). 
They are a generalization of directed graphs, as hypergraphs generalize graphs. 
Dihypergraphs are often used to model implication systems in various fields of computer 
science such as databases \cite{ausiello1986minimal, ausiello2017directed}, closure 
systems and lattice theory \cite{bertet2018lattices, wild2017joy}, 
propositional and Horn logic \cite{gallo1993directed, gallo1998max, wild2017joy} for 
instance.

A dihypergraph consists in a finite set of vertices $\U$ and a collection $\E$ of 
(hyper)edges (sometimes called hyperarcs) of the form $(B, h)$ over $\U$, where $B$ is a subset and $h$ a singleton of $\U$. In database theory, $\U$  corresponds to a relation  schema and edges are functional dependencies; whereas in Horn logic an edge is definite Horn clause on the propositional variables set $\U$. In general, an edge $(B, h)$ depicts a causality relation 
between $B$ and $h$, namely, whenever we deal with $B$ we also have to take $h$ into 
consideration. Note however that a more general definition of dihypergraph is given in 
\cite{gallo1993directed, gallo1998max} where the dihypergraphs we use in 
this paper are called $B$-graphs.

We are interested in decomposing a dihypergraph $\Hp=(\U,\E)$ into simpler dihypergraphs, that can be handled more efficiently. The \emph{hierarchical 
decomposition}  (H-decomposition for short) of a dihypergraph considered in this paper, is a {\it recursive 
partitioning} of the 
vertex set  of the dihypergraph into smaller subhypergraphs or clusters, in the fashion of hierarchical clustering 
(see \cite{dasgupta2016cost}). The H-decomposition is a way to represent a dihypergraph 
as a tree while preserving its vertices and edges. 
The notion of a split of a dihypergraph is the principal tool  we will use to achieve the H-decomposition. A split of a dihypergraph $\Hp=(\U,\E)$ is a partitioning of the dihypergraph's vertices into two subset $(\U_1,\U_2)$ such that the edges of $\Hp$ are the disjoint union of the edges of  the induced subhypergraphs  $\Hp[\U_1]$, $\Hp[\U_2]$ and the bipartite dihypergraph $\Hp[\U_1,\U_2]$, \ie for any $e=(B, h) \in 
\E$ intersecting both $\U_1$ and $\U_2$, we  have $B \subseteq \U_1$ and $h \in \U_2$ or 
vice versa. Clearly, there are dihypergraphs that cannot have a split. Our motivation is 
to study  properties of dihypergraphs that can be H-decomposed into trivial 
dihypergraphs, \ie hypergraphs with one vertex. 
The H-decomposition is represented by a full labelled binary trees called H-tree.

An application  for our work arises from the decomposition of closure systems, 
or lattices. The concept of splitting lattices or closure systems is an old 
question and remains an active topic in several areas in 
mathematics and computer science. Among the common ways to split a lattice are the 
subdirect decomposition,  the duplication (or doubling) of convex sets 
\cite{bertet2002doubling,viaud2015reverse},  and other summarised decomposition in 
\cite{ganter1999decompositions, ganter2012formal, gratzer2011lattice, 
kengue2005parallel}. 
The former has 
been early considered by Birkhoff in \cite{birkhoff1944subdirect} where his 
representation theorem 
\textit{``Every algebra is a subdirect product of its subdirectly irreducible homomorphic 
images''} is stated. Jipsen and Rose \cite{jipsen1992varieties} summarize many results 
related to subdirect decomposition and give a list of subdirectly irreducible lattices. 
From the algorithmic point of view, several works can be found in 
\cite{ganter1999decompositions, viaud2015subdirect} where 
closure systems are represented with binary matrices (known as contexts) instead of 
dihypergraphs. 
Database theory community has however provided some 
decomposition schemes for dihypergraphs such as in \cite{demetrovics1992functional, 
spencer1996efficient} or \cite{libkin1993direct}, in view of database normalization. 
Other works on decomposition of dihypergraphs  are  considered  in \cite{berg2003edge, 
gallo1998max, ausiello2017directed, popp2020multilevel}, but these works differ in aims 
and methods from our work.

 In this paper, we present a polynomial time and space algorithm to achieve such a H-decomposition by producing its corresponding H-tree if it exists. However, there are dihypergraphs that cannot be completely decomposed into trivial components. Therefore, we relax this requirement to more indecomposable dihypergraphs called H-factors. This relaxation allows us to extend the H-decomposition of  dihypergraphs to closure systems and lattices. This approach of H-decomposing closure systems permit a deep understanding of the subdirect product via the dihypergraphs representation of closure systems.

%theory.

The paper is structured as follows. In Section \ref{sec:prelim} we recall some 
definitions about dihypergraphs. In Section \ref{sec:structure} we define the 
hierarchical decomposition of dihypergraphs, and its representation by a binary labelled tree. We also give a polynomial time and space algorithm to recognise dihypergraphs having a H-decomposition and produces the tree decomposition. Section 
\ref{sec:clos} extends the H-decomposition to closure systems and provide some properties that can be useful for closure systems classification. 

% In particular we show that Tamari lattices can be represented by a $\Hp$-tree.

\section{Preliminaries}
\label{sec:prelim}
%%!TEX root = ./main.tex

All the objects considered in this paper are finite. For a set  $\U$, we denote by 
$2^{\U}$ its powerset, and for $n \in \cb{N}$, 
we denote by $[n]$ the set $\{1, \dots, n\}$. We also sometimes omit braces for sets, writing $v_1 v_2\dots v_n$ for the set $\{v_1, \dots, v_n\}$.

We mainly refer  to papers \cite{ausiello2017directed, gallo1993directed} for terminology and definitions of dihypergraphs. A \emph{(directed) hypergraph} 
(\emph{dihypergraph} for short) $\Hp$ is a pair $(\U(\Hp), \E(\Hp))$ where $\U(\Hp)$ is its set of 
vertices,  and $\E(\Hp) = \{e_1, \dots, e_n\}$,  $n \in \cb{N}$, its set of \emph{edges}. 
An edge $e\in \E(\Hp)$ is a pair $(B(e), h(e))$, where $B(e) \subseteq \U$  called the \emph{body} of $e$ and $h(e) \in \U \setminus B$ 
 called the \emph{head} of $e$.

 When it is clear from the context, we write $\U$, $\E$ and $(B, h)$ instead of 
 \ $\U(\Hp)$, $\E(\Hp)$ and  $(B(e), h(e))$ respectively. An edge $e=(B, h)$ is  written  as the set $e=B \cup \{h\}$ when no confusion can arise.
Whenever  the body $B$ of an edge is reduced to a single element 
$b$, we shall write $(b, h)$ instead of $(\{b\}, h)$ for clarity. In this case, the edge  $(b, h)$ is called a \emph{unit edge}. 
If all the edges of a dihypergraph are unit, then it is called  a {\it digraph}.
 
 %Given a dihypergraph $\Hp = (\U, \E)$ and two vertices $v,v'\in \U$. A path from $v$ to $v'$ in $\Hp$ is a sequence of $k$ edges $(B_1,h_1)...(B_k,h_k)(B_{k+1},h_{k+1})$ such that $v\in B_1$, $v'=h_k$ and $h_i\in B_{i+1}, i\in[k]$. If $v'\in B_1$ then it is a cycle. A dihypergraph is {\it acyclic} if there is no cycle between any two vertices.

Let $\Hp = (\U, \E)$ be a dihypergraph and $U$ a subset of  $\U$. The subhypergraph $\Hp[U]$ 
\emph{induced} by $U$ is the pair $(U, \E(\Hp[U]))$ where $\E(\Hp[U])$ is the set of 
edges of $\E$ contained in $U$, namely $\E(\Hp[U]) = \{e \in \E \mid e \subseteq U\}$.
 A \emph{bipartite dihypergraph} is a 
dihypergraph in which the ground set can be partitioned  into two parts $(\U_1, \U_2)$ such 
that for any $(B,h) \in \E$, $B \subseteq \U_1$ or $B \subseteq \U_2$. We denote a bipartite dihypergraph by $\Hp[\U_1,\U_2]$.
The \emph{size} of a dihypergraph $\Hp$ is written $\card{\Hp}$ and is given by 
$\card{\Hp} = \card{\U} + \sum_{e \in \E} \card{B(e)} + 1$. The number of edges in $\E$ 
is written $\card{\E}$.

%Given a dihypergraph $\Hp = (\U, \E)$, we define a body-path in $\Hp$ to be  a sequence $v_1,e_1,v_2,...,v_k,e_k,v_{k+1}$ of distinct vertices and edges of $\Hp$ such that: (1) $v_i\in \U, i\in[k+1]$, (b)  $e_i=(B_i,h_i)\in \E, i\in[k]$, and (3) $\{v_i,v_{i+1}\}\subseteq B_i, i\in[k]$. Two vertices $v, v'\in \U$ are said to be connected in $\Hp$ if there exists a body-path from $v$ and $v'$. A dihypergraph $\Hp$ is connected if every pair of vertices $v,v'\in \U$  is connected in $\Hp$. A connected component of a dihypergraph $\Hp$ is a maximal connected  subdihypergraph of $\Hp$.

Let $T = (\U(T), \E(T))$ be a full rooted binary tree and $v \in \U(T)$. We 
denote by $\lft(v)$ its left child and $\rht(v)$ its right one. The subtree induced by $v$
is written $T[v]$, and the leaves of $T[v]$ are given by $\csf{leaves}(v)$.
%As we will use full rooted binary trees as a support for decompositions of 
%dihypergraphs, 
Sometimes, we will write $v \in T$ as a shortcut for $v \in \U(T)$.
We  assume that the ground set $\U(T)$ is disjoint from the ground set of any 
dihypergraph we will deal with.

\section{Hierarchical decomposition of a dihypergraph}
\label{sec:structure}
%%!TEX root = ./main.tex

%\textcolor{red}{
%In this section, we introduce a hierarchical decomposition of a dihypergraph from which 
%it can be fully recovered.
%This decomposition will be represented by a full rooted binary tree labelled with edges 
%and vertices of the dihypergraph: a \emph{$\Hp$-tree}.
%The decomposition consists in recursively applying a partitioning operation we will call 
%a \emph{split}.
%A split of a dihypergraph is a (full) bipartition of its ground set such that the body 
%of 
%any edge is contained in one of the two parts.
%With a split, an edge is either fully contained in a part, or its head and its body lie 
%in distinct parts of the bipartition.
%Hence, a split shows that a dihypergraph can be expressed as two distinct dihypergraphs 
%interacting on each other.
%We show that some dihypergraphs cannot be recursively decomposed using this strategy.
%On the positive side, we give a polynomial time and space algorithm which takes a 
%dihypergraph as an input, and outputs a $\Hp$-tree if it exists.
%First, we need to formally define a split.
%} 

In this section, we introduce a hierarchical decomposition (H-decomposition)  of a dihypergraph, as a recursive partition of the edges into bipartite dihypergraphs,  from which 
it can be fully recovered.
We are interested first  in the class of dihypergraphs that have a hierarchical decomposition. 
Given a dihypergraph $\Hp = (\U, \E)$, we define the partitioning operation called a 
\emph{split} of $\Hp$. 
Then we recursively apply the splitting operation until reaching trivial dihypergraphs. The H-decomposition of a dihypergraph $\Hp$ will be represented by a rooted binary tree, called  
\emph{$\Hp$-tree}. 

We show that not all dihypergraphs can have such a H-decomposition into trivial dihypergraphs, and give a polynomial 
time and space algorithm which takes a dihypergraph as an input, and outputs a $\Hp$-tree 
if it exists. Moreover, we relax the requirement of the H-decomposition into trivial dihypergraphs to H-factors which are body-connected dihypergraphs.

%give a polynomial time and space to recognize if a given dihypergraph has a hierarchical decomposition.
%The proposed algorithm also  produces  the corresponding 
%$\Hp$-tree from $\Hp$ if it exists. 

\subsection{Split operation}

First we define the split operation of a dihypergraph as follows.

\begin{definition}[split] \label{def:split}
Let $\Hp = (\U, \E)$ be a dihypergraph. A  non-trivial bipartition  $(\U_1, \U_2)$  
of the groundset $\U$ is a \emph{split} of $\Hp$, if for any $e = (B, h) \in 
\E$, $B \subseteq \U_1$ or $B \subseteq \U_2$. 
%$B \cap \U_1 = \emptyset$ or $B \subseteq \U_2 = \emptyset$. 
\end{definition}

A split $(\U_1, \U_2)$ induces three subhypergraphs $\Hp[\U_1]$, $\Hp[\U_2]$ and a 
bipartite dihypergraph $\Hp[\U_1,\U_2] = (\U_1, \U_2, \E_{12})$ where $\E_{12} = \{e \in 
\E \mid e \nsubseteq \U_1 \text{ and } e \nsubseteq \U_2\}$. Moreover, the edges of $\Hp[\U_1]$, $\Hp[\U_2]$ and  $\Hp[\U_1,\U_2]$ form a partition of the edges of $\Hp$.  Indeed, no edge is missed by a split.
Intuitively, the split shows that $\Hp$ is fully described by two smaller distincts 
dihypergraphs $\Hp[\U_1]$ and $\Hp[\U_2]$ acting on each other through the bipartite 
dihypergraph $\Hp[\U_1,\U_2]$. 
%
%For a split $\pi$ of $\Hp$, 
%the bipartite subhypergraph induced by the set of edges that cross the 
%bipartition is called a \emph{splitset} of $\Hp$, and is denoted by 
%$\Hp_{\pi} = (\U_1,\U_2,\E_{12})$ where $\E_{12} = \{e \in \E \mid e \nsubseteq \U_1 
%\text{ and } e \nsubseteq \U_2\}$.
%

%The following example shows two possible bipartition of a dihypergraph. One of them is 
%a split, the other is not
%
%The following example shows the difference between a split and the classical notion of 
%cuts in dihypergraph \cite{gallo1998max}.

\begin{example} \label{ex:intro}
Consider the dihypergraph $\Hp = (\U, \E)$ depicted in Figure 
\ref{fig:ex-intro}, with $\U = [7]$ and $\E = \{(12, 3),  \allowbreak (3, 1), (56, 2), 
(23, 7), (45, 6), (5, 7)\}$.
The bipartition illustrated by the full line separates $\U$ in two sets $\{1, 3\}$ and 
$\{2, 4, 5, 6, 7\}$. 
It is not a split since the body of the edge $(12, 3)$ intersects the two parts, and will be missed.
The bipartition corresponding to the dotted line  
$\U_1=\{1, 2, 3\}$ and $\U_2=\{4, 5, 6, 7\}$ is a split, with $\Hp[\U_1]=(\{1, 2, 3\}, 
\{(12, 3),(3, 1)\})$, $\Hp[\U_2]=(  \{1, 2, 3\}, \{ (45, 6), (5, 7)\})$, and 
$\Hp[\U_1,\U_2] = (\{1, 2, 3\} \cup \{4, 5, 6, 7\}, \allowbreak \{(56, 2),(23, 7)\})$. 

\begin{figure}[ht!]
	\centering 
	\includegraphics[scale=0.8]{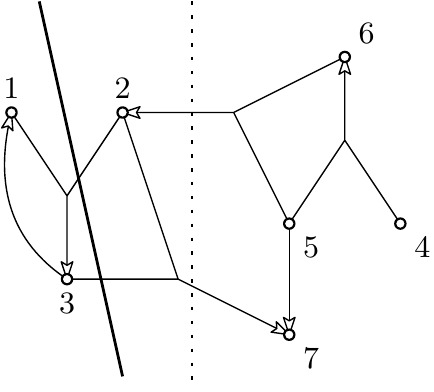}
	\caption{The full line illustrates a bipartition which is not a split, whereas the 
	dotted line corresponds to a split.}
	\label{fig:ex-intro}
\end{figure}
\end{example}

Before giving a characterization of dihypergraphs having a split, we consider some 
special cases.

\begin{itemize}
	\item If the dihypergraph $\Hp$ is a \emph{digraph} or has no edge. Then any 
	bipartition of the ground set is a split. 
	\item However, there are  dihypergraphs that cannot have a bipartition that corresponds to a split.  
	For example, any bipartition of the dihypergraph $\Hp = (\{1, 2, 3\}, \{(12, 3), \allowbreak (13, 
	2)\})$ would miss an edge. 
	For instance, if we consider the bipartition $\U_1=\{1, 2\}$ and $\U_2=\{3\}$, then 
	we capture $(12, 3)$ but not $(13, 2)$, \ie $\Hp[\U_1]=(\{1, 2\}, \emptyset)$, 
	$\Hp[\U_2]=(\{3\}, \emptyset)$, and $\Hp[\U_1,\U_2]=(\{1, 2\}\cup \{3\}, \{(12,3)\})$.
\end{itemize}

In the following, we show that the  dihypergraph's connectivity  is important for the notion of a 
split. 
Given a dihypergraph $\Hp = (\U, \E)$, we define a \emph{body-path} in $\Hp$ to be a 
sequence $v_1,e_1,v_2,...,v_k,e_k,v_{k+1}$ of distinct vertices and edges of $\Hp$ such 
that: (1) $v_i\in \U, i\in[k+1]$, (2)  $e_i=(B_i,h_i)\in \E, i\in[k]$, and (3) 
$\{v_i,v_{i+1}\}\subseteq B_i, i\in[k]$. Two vertices $v, v'\in \U$ are said to be 
\emph{body-connected} in $\Hp$ if there exists a body-path from $v$ and $v'$. A 
dihypergraph $\Hp$ is \emph{body-connected} if every pair of vertices $v,v'\in \U$ is 
body-connected in $\Hp$. A body-connected component of a dihypergraph $\Hp$ is a maximal 
subset of $\U$ where any pair of vertices is body-connected. Figure \ref{fig:notree} shows a body-connected 
dihypergraph.

%
%{\color{red} $ \Hp = ([3], \{(12, 3)\})$. $12$ is body-connected but is not a maximal 
%body-connected subhypergraph of $\Hp$}
%A body-connected component of $\Hp$ is a maximal subset of $\U$ where any pair of 
%vertices is body-connected.
%

%To investigate this question we need to introduce a connectivity condition on bodies of 
%edges. 
%Let $\Hp = (\U, \E)$ be a dihypergraph and let $B_1, \dots, B_n$ be a set 
%of bodies of $\E$. 
%We say that the sequence $B_1, \dots, B_n$ is a \emph{body-path} if 
%for any $1 \leq i < n$ we have $B_i \cap B_{i + 1} \neq \emptyset$. 
%Two vertices  $u, v \in \U$ are \emph{body-connected} if there is a body-path $B_1, 
%\dots, B_n$ such that $u \in B_1$ and $v \in B_n$.
%A subset $C$ of $\U$ is a \emph{body-connected component} of $\Hp$ if for any $u, v \in 
%C$, $u$ and $v$ are body-connected and $C$ is maximal inclusion-wise for this property. 
%%We denote by $\cc{C}(\Hp)$ the set of body-connected components of $\Hp$. 
%If $\Hp$ has a unique body-connected component, we will say that it is 
%\emph{body-connected}.
%The dihypergraph $\Hp = ([3], \{(12, 3), (13, 2)\})$ depicted  in Figure \ref{fig:notree} is body-connected.
%Indeed, $1$ and $2$ are connected by $(12, 3)$, $1$ and $3$ by $(13, 2)$. By 
%transitivity, all the vertices are body-connected.

\begin{figure}[ht]
\centering 
\includegraphics[scale=0.9]{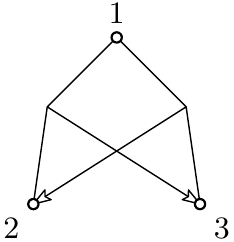}
\caption{A body-connected dihypergraph}
\label{fig:notree}
\end{figure}

First observe that a body reduced to a singleton always satisfies condition of Definition 
\ref{def:split}.  Thus, unit edges of a dihypergraph have no impact on a split.
%
%Let us mention the impact of unit edges in body-connectivity. Indeed, 
%body-connectivity needs to check intersections of bodies of edges. Therefore, a unit edge 
%$(b, h)$ brings no information on the body-connectivity of $b$ since for any other edge 
%$(B', h')$, $B' \cap \{ b\} \neq \emptyset$ implies $b \in B'$. Consequently, the edge 
%$(b, h)$ can be removed from any body-path it belongs to. We are thus lead to the 
%following observation.
%\begin{observation}
%Let $\Hp$ be a dihypergraph. If two vertices are body-connected in $\Hp$, they remain 
%body-connected even if we remove every unit edge from $\Hp$.
%\end{observation}
%
%With the next proposition, we show that body-connectivity is indeed important when we 
%want to check whether a dihypergraph is H-decomposable or not.
Next,  we give a characterization of dihypergraphs that have a split.

\begin{proposition} \label{prop:body-connected}
	A dihypergraph $\Hp$ has a split iff it is not body-connected.
\end{proposition}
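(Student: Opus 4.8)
The plan is to prove both implications by reducing body-connectivity to an elementary crossing argument over the bodies of edges. The one structural fact I would isolate first is that \emph{every body lies inside a single body-connected component}: if $e = (B,h)$ is an edge and $u, w \in B$ are distinct, then the one-step sequence $u, e, w$ is a body-path (since $\{u,w\} \subseteq B$), so $u$ and $w$ are body-connected; bodies of size at most one are trivially contained in one component. This observation is what ties the purely combinatorial split condition to the connectivity notion, and I would use it in both directions.

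For the forward direction (a split implies not body-connected), I would argue through the partition. Suppose $(\U_1, \U_2)$ is a split, and pick $u \in \U_1$, $w \in \U_2$, which is possible since the bipartition is non-trivial. If $u$ and $w$ were body-connected by a body-path $v_1 = u, e_1, \dots, e_k, v_{k+1} = w$, then since $v_1 \in \U_1$ and $v_{k+1} \in \U_2$ there is an index $i$ with $v_i$ and $v_{i+1}$ on opposite sides of the bipartition. But $\{v_i, v_{i+1}\} \subseteq B(e_i)$, so $B(e_i)$ meets both $\U_1$ and $\U_2$, contradicting that $B(e_i) \subseteq \U_1$ or $B(e_i) \subseteq \U_2$. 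Hence no such path exists, $u$ and $w$ are not body-connected, and $\Hp$ is not body-connected.

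For the backward direction (not body-connected implies a split), I would simply exhibit one. If $\Hp$ is not body-connected, choose a body-connected component $\U_1 \subsetneq \U$ and set $\U_2 = \U \setminus \U_1$; both parts are non-empty, so $(\U_1, \U_2)$ is a non-trivial bipartition. For each edge $e = (B,h)$, the isolated fact says $B$ sits inside a single component, which is either $\U_1$ (giving $B \subseteq \U_1$) or one of the components making up $\U_2$ (giving $B \subseteq \U_2$). Thus the split condition of Definition~\ref{def:split} holds.

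The routine content is genuinely light here; the main thing to get right is the crossing step in the forward direction — precisely locating a consecutive pair of the body-path that straddles the partition — together with the small edge cases of singleton or empty bodies, which the paper has already flagged as harmless since such bodies satisfy the split condition automatically. I would also remark that the argument in fact shows that the body-connected components refine every split, so that the finest split corresponds exactly to the partition of $\U$ into its body-connected components.
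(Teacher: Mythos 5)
Your proposal is correct and follows essentially the same route as the paper: the crossing argument on a consecutive pair of the body-path for the forward direction, and taking one body-connected component as a part of the bipartition for the converse. Your write-up of the forward direction is in fact a bit more careful than the paper's (which states the crossing step somewhat elliptically), but the underlying argument is identical.
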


\begin{proof} 
Suppose that $\Hp$ has a non trivial split $(\U_1, \U_2)$,  and let $v \in \U_1$ and $v' \in \U_2$.
Assume the existence of a body-path $v=v_1,e_1,v_2,...,v_k,e_k,v_{k+1}=v'$.
Such a body-path exists if there is $i \in [k]$ such that $e_i=(B_i,h_i)$ and $B_i \cap 
\U_1 \neq \emptyset$ and $B_i \cap \U_2 \neq \emptyset$.
But, the edge $e_i = (B_i, h_i)$ cannot satisfy the condition of Definition 
\ref{def:split}. 
Then $v, v'$ are not body-connected and thus $\Hp$ is not body-connected.

Conversely, suppose that $\Hp$ is not body-connected and $C$ be a body-connected 
component of $\Hp$. 
We show that $(C, \U\setminus C)$ is a split. 
Let $e=(B,h)\in \E$. 
Since $C$ is a maximal body-connected component, either $B\cap C=\emptyset$ or 
$(\U\setminus C)\cap B=\emptyset$. 
Hence $(C,\U\setminus C)$ is a split.
\end{proof}

It is important to note that body-connectivity is not inherited. That is, a
subhypergraph induced by a body-connected component may not be body-connected. Consider the dihypergraph 
in Figure \ref{fig:ex-intro} with the split $\U_1=\{1,2,3\}$ and $\U_2=\{4,5,6,7\}$. Then $5$ and $6$ were body-connected in $\Hp$ but not in $\Hp[\U_2]$. Therefore, body-connected components may be 
decomposed in turn.
The main idea of the H-decomposition is to recursively apply the split operation until we 
reach a trivial dihypergraph. 

\subsection{$\Hp$-tree of a dihypergraph}

Based on the split operation, we define the H-decomposition of a dihypergraph. We recursively split a dihypergraph into smaller dihypergraphs until we reach a trivial dihypergraph.
This recursive decomposition can be conveniently represented by a full rooted binary 
tree. An interior node of the tree corresponds to a split $(\U_1,\U_2)$ whose children 
correspond to the H-decomposition of $\Hp[\U_1]$ and $\Hp[\U_2]$; the leaves 
of the tree represent the ground set.
Since the splits $(\U_1,\U_2)$ and $(\U_2,\U_1)$ are the same, the order of the children of an interior node is not important. 

\begin{definition}[$\Hp$-tree] \label{def:is-labelling}
Let $\Hp = (\U, \E)$ be a dihypergraph, $T$ be a full rooted binary tree. Then $(T,\lambda)$ 
is a $\Hp$-tree of $\Hp$ if there exists a 
labelling map  $\lambda \colon T \rightarrow \U \cup 
2^{\E}$ satisfying the following conditions:
\begin{enumerate}[label=(\roman*)]
	\item $\lambda(v) \in \U$ if $v$ is a leaf of $T$,
	\item $\lambda(v) \subseteq \E$ if $v$ is an interior node (possibly 
	$\lambda(v) = \emptyset$),
	\item for any $(B, h) \in \lambda(v)$, elements of $B$ are labels of leaves 
	in the subtree of one child of $v$ and $h$ is the label of a leaf in the subtree of 
	the other child. 
%	then for any $b \in B$, there exists a leaf 
%	$\ell$ in the left subtree (resp. right subtree) of $v$ such that $\lambda(l) = b$ 
%	and there is a leaf $\ell'$ in the right subtree (resp. left subtree) of $v$ such 
%	that $\lambda(\ell') = h$,
%	\item[$L_p$]: if $(B, h) \in \lambda(v)$, then for any $b \in B$, there exists a leaf 
%	$\ell$ in the left subtree (resp. right subtree) of $v$ such that $\lambda(l) = b$ 
%	and there is a leaf $\ell'$ in the right subtree (resp. left subtree) of $v$ such 
%	that $\lambda(\ell') = h$,
	
	\item the set $\{\lambda(v) \mid v \in T\}$ is a full partition of \ $\U \cup 
	\E$ and may contain the emptyset. 
\end{enumerate}
If such labelling exists we call the dihypergraph hierarchically decomposable (H-decomposable for short), and H-indecomposable otherwise.
\end{definition}

%
%
%Let us discuss the conditions we defined. 
%The discussion is followed by some example of $\Hp$-trees. 
%Thanks to conditions \emph{(ii), (iii)}, the label of an interior node $v$ must be the 
%edges of a splitset of the dihypergraph induced by $\csf{leaves}(v)$. 
%Condition \emph{(i)} not only stops the decomposition when it reaches single vertices, 
%but it also guarantees that the resulting $\Hp$-tree has a number of nodes bounded by 
%$\card{\U}$ and hence an overall size bounded by $\card{\Hp}$.
%Finally, condition \emph{(iv)} shows that 
%the decomposition is lossless and that the initial dihypergraph can be fully recovered 
%from the tree by a simple traversal (such as DFS or BFS).
%We now give some example of $\Hp$-trees.
%

Figure \ref{fig:hptree-intro} shows a $\Hp$-tree for the dihypergraph in Figure 
\ref{fig:ex-intro}.

\begin{figure}[ht]
\centering 
\includegraphics[scale=0.8]{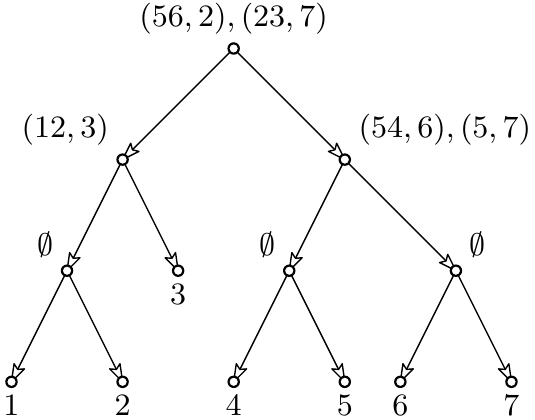}
\caption{A H-decomposition for the dihypergraph  in Figure \ref{ex:intro}}
\label{fig:hptree-intro}
\end{figure}

There are two interesting cases where a H-decomposition of a dihypergraph 
$\Hp$ can be computed easily (see Figure \ref{fig:ex-simple-tree}).
\begin{itemize}
	\item the dihypergraph $\Hp$ has no edges. Here, any full rooted binary tree whose leaves are labelled by a 
	permutation of $\U$ and any interior node by $\emptyset$ is a $\Hp$-tree of $\Hp$. 
	\item $\Hp$ is a \emph{digraph}. The same as for the previous case, except that an edge $(b, h)$ will be in the 
	label of the least common ancestor of the leaves labelled by $b$ and $h$.
\end{itemize}

%We illustrate these constructions in the following example.

%\begin{example} \label{ex:simple-tree}
%Let us consider $\Hp_1 = ([4], \emptyset)$ and $\Hp_2 = ([4], \{(1, 2), (2, 3), (3, 4), 
%(4, 1)\})$. Figure \ref{fig:ex-simple-tree} illustrates the hierarchical decomposition of  $\Hp_1$ and  $\Hp_2$.

\begin{figure}[ht!]
\centering
\begin{subfigure}{0.4\linewidth}
	\centering
	\includegraphics[scale=0.8]{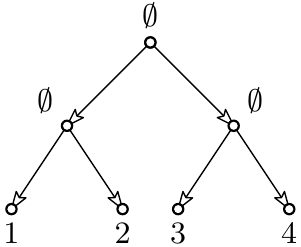}
	%\subcaption{A $\Hp_1$-tree}
\end{subfigure}
\begin{subfigure}{0.4\linewidth}
	\centering 
	\includegraphics[scale=0.8]{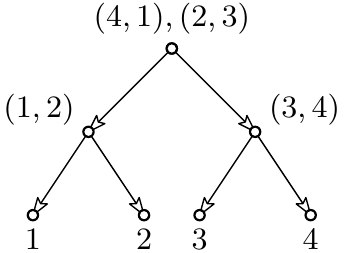}
	%\subcaption{A $\Hp_2$-tree}
\end{subfigure}
\caption{ Hierarchical decompositions for the empty dihypergraphs   $\Hp_1 = ([4], \emptyset)$, and  for the directed graph $\Hp_2 = ([4], \{(1, 2), (2, 3), (3, 4), 
(4, 1)\})$} 
\label{fig:ex-simple-tree}
\end{figure}

%\end{example}

However, there are also some dihypergraphs that cannot be H-decomposed. 

\begin{proposition}\label{prop:H-body-connected}
	If \ $\Hp$ is H-decomposable then it is not body-connected.
\end{proposition}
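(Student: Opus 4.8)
The plan is to prove the contrapositive combined with the characterization already established. Proposition~\ref{prop:body-connected} tells us that a dihypergraph has a split if and only if it is not body-connected. So the claim ``$\Hp$ H-decomposable $\Rightarrow$ $\Hp$ not body-connected'' should reduce to showing that an H-decomposable dihypergraph must admit a split at its root, hence cannot be body-connected. First I would observe that if $\Hp$ is trivial (a single vertex), the statement is vacuous or immediate, since a one-vertex dihypergraph has no pair of distinct vertices and so is trivially not body-connected in the relevant sense; I would set this base case aside and assume $\card{\U} \geq 2$.

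The main step is to extract a split from the $\Hp$-tree. Suppose $(T,\lambda)$ is an $\Hp$-tree of $\Hp$. Since $\card{\U}\geq 2$ and $T$ is a full rooted binary tree whose leaves are labelled bijectively by $\U$ (by conditions (i) and (iv)), the root $r$ has two children $\lft(r)$ and $\rht(r)$. I would set $\U_1 = \{\lambda(\ell) \mid \ell \in \csf{leaves}(\lft(r))\}$ and $\U_2 = \{\lambda(\ell) \mid \ell \in \csf{leaves}(\rht(r))\}$. These form a non-trivial bipartition of $\U$. The key verification is that $(\U_1,\U_2)$ satisfies Definition~\ref{def:split}: for any edge $e=(B,h)\in\E$, by condition (iv) there is a unique node $v$ with $e\in\lambda(v)$, and by condition (iii) the whole body $B$ lives in the leaves of a single child's subtree of $v$. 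Since every node $v$ is either the root or lies entirely within the subtree of $\lft(r)$ or entirely within that of $\rht(r)$, I would argue that $B$ is contained in $\U_1$ or in $\U_2$ in every case. The cleanest way is to note that $\csf{leaves}(\lft(r))$ and $\csf{leaves}(\rht(r))$ partition the leaves, and for any interior node $v$, $\csf{leaves}(v)$ is contained in one of these two blocks unless $v=r$; and when $v=r$, condition (iii) places all of $B$ in one child's leaf set, i.e.\ entirely in $\U_1$ or entirely in $\U_2$. Either way $B\subseteq\U_1$ or $B\subseteq\U_2$.

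Having exhibited a split $(\U_1,\U_2)$ of $\Hp$, I would invoke Proposition~\ref{prop:body-connected} directly: a dihypergraph with a split is not body-connected. This completes the argument.

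The part I expect to require the most care is the case analysis for the root node itself. For an edge placed at an interior node strictly below $r$, its body is automatically confined to one side because the whole subtree of $v$ sits inside $\U_1$ or $\U_2$; the genuinely load-bearing case is an edge $e=(B,h)\in\lambda(r)$, where I must lean on condition (iii) to guarantee that $B$ does not straddle the two children. I would therefore state this case explicitly and make sure condition (iii) is applied to $v=r$ so that $B$ lands wholly in $\csf{leaves}(\lft(r))$ or wholly in $\csf{leaves}(\rht(r))$, which is exactly the split condition. A minor subtlety to handle cleanly is confirming the bipartition is non-trivial, which follows because both children of the root have at least one leaf.
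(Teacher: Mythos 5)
Your proof is correct and follows essentially the same route as the paper: extract the bipartition of $\U$ induced by the leaves of the two subtrees of the root of the $\Hp$-tree, check it is a split, and apply Proposition~\ref{prop:body-connected}. In fact you supply the verification (via conditions (iii) and (iv)) that the paper's own proof leaves implicit.
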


\begin{proof}
Suppose that $\Hp$ is H-decomposable, and let $(T, \lambda)$ be a $\Hp$-tree with root 
$r$. 
Let $(\U_l, \U_r)$ be the split of $\U$ corresponding to $r$, \ie $\U_l$ corresponds to 
the leaves of the left subtree of $r$ and $\U_r$ to those of the right subtree. Then 
according to Proposition \ref{prop:body-connected}, $\Hp$ is not body-connected.
\end{proof} 

%Figure \ref{} shows a counter example of the  converse of Proposition \ref{prop:H-body-connected}. 
Now, we show that H-decomposability is hereditary, \ie if a dihypergraph $\Hp$ has a 
$\Hp$-tree then any of its subhypergraphs has a H-decomposition too.

\begin{proposition}\label{prop:subset}
Let $\Hp = (\U, \E)$ be a dihypergraph and $U \subseteq \U$. If \ $\Hp$ is 
H-decomposable, so is $\Hp[U]$.
\end{proposition}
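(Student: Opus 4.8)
The plan is to prove heredity by induction on the structure of the $\Hp$-tree $(T, \lambda)$ of $\Hp$, producing an $\Hp[U]$-tree from it. The natural approach is to take the given tree $T$ and ``restrict'' it to the subset $U$: intuitively, we delete all leaves labelled by vertices outside $U$, contract the resulting degenerate interior nodes so that the tree remains a \emph{full} binary tree (every interior node having exactly two children), and then check that the induced labelling still satisfies the four conditions of Definition \ref{def:is-labelling} with respect to the induced dihypergraph $\Hp[U]$.

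First I would set up the pruning operation precisely. For a leaf $v$ with $\lambda(v) \in \U \setminus U$, remove it; whenever an interior node loses a child and is left with a single surviving subtree, splice that subtree directly to the node's parent, discarding the now-redundant interior node together with its edge-label. Call the resulting tree $T'$ and let $\lambda'$ be the restriction of $\lambda$ to the surviving nodes. The surviving leaves are exactly those labelled by elements of $U$, so condition (i) is immediate. For the interior-node labels I would redefine $\lambda'(v) := \lambda(v) \cap \E(\Hp[U])$, i.e. keep only those edges $(B,h)$ with $B \cup \{h\} \subseteq U$; this guarantees condition (ii) and, more importantly, that every edge of $\Hp[U]$ is accounted for.

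The key step is verifying condition (iii): for an edge $(B,h) \in \lambda'(v)$ that survives the restriction, the elements of $B$ must still be labels of leaves in the subtree of one child of $v$ in $T'$, and $h$ in the other. Here I would use the fact that contraction never moves a leaf across the ``sides'' of an ancestor it is preserved under: if in $T$ a leaf lay in the left subtree of some node, then after splicing away degenerate nodes it still lies in the corresponding surviving child's subtree. Since $(B,h)$ being in $\Hp[U]$ means all of $B \cup \{h\}$ is in $U$, all these leaves survive the pruning, so the body/head separation witnessed by $v$ in $T$ is inherited by $v$ in $T'$. The main obstacle is precisely this bookkeeping: after a sequence of contractions, a surviving interior node $v$ may have acquired new descendant leaves (those promoted from a spliced subtree), so I must argue carefully that the \emph{partition} of $B \cup \{h\}$ into the two child-subtrees is unchanged, rather than merely that the leaves survive.

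Finally, for condition (iv) I would observe that $\{\lambda'(v) \mid v \in T'\}$ partitions $U \cup \E(\Hp[U])$: the leaf labels give exactly $U$, and since $\{\lambda(v)\}$ partitioned $\E(\Hp)$ while we intersected each piece with $\E(\Hp[U])$ and discarded only empty-or-removed blocks, the surviving edge-labels partition $\E(\Hp[U])$; empty blocks are permitted by the definition. An alternative, perhaps cleaner, route avoiding the contraction bookkeeping is an inductive argument directly on splits: by Proposition \ref{prop:body-connected} the root gives a split $(\U_1,\U_2)$ of $\Hp$, whence $(\U_1 \cap U, \U_2 \cap U)$ either is a split of $\Hp[U]$ (if both parts are nonempty) or $U$ lies entirely in one part, in which case we recurse into the single relevant subtree; the induction hypothesis applied to $\Hp[\U_1]$ and $\Hp[\U_2]$ with the subsets $\U_1 \cap U$ and $\U_2 \cap U$ then assembles the desired $\Hp[U]$-tree. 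I expect this second formulation to make the heredity transparent while sidestepping the degenerate-node gymnastics.
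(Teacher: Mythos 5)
Your primary construction is essentially the paper's own proof: the paper also prunes the $\Hp$-tree by replacing any interior node one of whose children has no leaf in $U$ with the other child, and relabels the surviving interior nodes with $\lambda(v) \cap \E(\Hp[U])$. Your verification of condition \emph{(iii)} is in fact more careful than the paper's (which asserts the conclusion without the bookkeeping), so the proposal is correct and takes the same route.
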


\begin{proof}
Let $\Hp = (\U, \E)$ be a dihypergraph, $U \subseteq \U$ and $(T, \lambda)$ a $\Hp$-tree.
We construct a subtree not necessarily induced by $T$ which corresponds to a 
$\Hp[U]$-tree. 
We start from the root $r$ of $T$ and apply the following operation for any interior node 
$v$: if the sets of leaves of the left child and those of the right one intersect 
both $U$, then keep $v$ with label $\lambda(v)= \lambda(v) \cap \E(\Hp[U])$. 
Otherwise, there is a child  of $v$ whose set of leaves do not intersect $U$, in 
this case replace $v$ by the child whose set of leaves intersects $U$. 
The obtained subtree has $U$ as the set of leaves, and the set of labels of 
the internal nodes are exactly  $\E(\Hp[U])$. 	
\end{proof}

%With this proposition at hand, we can now expose the structural theorem which tells us 
%that figuring out whether a dihypergraph $\Hp$ is H-decomposable can be decided 
%recursively, notably through the use of its body-connected components. 

The following 
theorem gives the strategy of the algorithm for recognizing which hypergraphs have a  
H-decomposition.

\begin{theorem} \label{thm:characterization}
Let $\Hp = (\U, \E)$ be a non body-connected dihypergraph and $C$ a body-connected 
component of $\Hp$. Then $\Hp$ is H-decomposable if and only if both a 
$\Hp[C]$ and $\Hp[\U \setminus C]$ are H-decomposable.
\end{theorem}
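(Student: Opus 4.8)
The plan is to prove both directions of the equivalence, using the two structural results already established: Proposition~\ref{prop:subset} (heredity) for the forward direction, and the recursive nature of the $\Hp$-tree together with Proposition~\ref{prop:body-connected} for the converse. The key observation driving the argument is that since $C$ is a body-connected component of the non-body-connected dihypergraph $\Hp$, the pair $(C, \U \setminus C)$ is itself a split of $\Hp$, exactly as shown in the proof of Proposition~\ref{prop:body-connected}. Moreover, since $C$ is a \emph{maximal} body-connected set, every edge $e = (B,h) \in \E$ has $B \subseteq C$ or $B \subseteq \U \setminus C$, so no edge of $\Hp$ is lost when we partition the edges among $\Hp[C]$, $\Hp[\U\setminus C]$, and the bipartite dihypergraph $\Hp[C, \U \setminus C]$.

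For the forward direction, I would simply invoke Proposition~\ref{prop:subset}: if $\Hp$ is H-decomposable, then since $C \subseteq \U$ and $\U \setminus C \subseteq \U$, both induced subhypergraphs $\Hp[C]$ and $\Hp[\U \setminus C]$ are H-decomposable. This is immediate from heredity, so the forward implication costs almost nothing.

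For the converse, I would build a $\Hp$-tree for $\Hp$ from the two given $\Hp$-trees. Let $(T_1, \lambda_1)$ be a $\Hp[C]$-tree and $(T_2, \lambda_2)$ a $\Hp[\U\setminus C]$-tree. Construct $T$ by creating a fresh root $r$ whose two subtrees are $T_1$ and $T_2$; the leaves of $T$ are then in bijection with $C \cup (\U \setminus C) = \U$, as required. Define $\lambda$ to agree with $\lambda_1$ on nodes of $T_1$ and with $\lambda_2$ on nodes of $T_2$, and set $\lambda(r) = \E_{12}$, the edges of the bipartite dihypergraph $\Hp[C, \U\setminus C]$. I must then verify the four conditions of Definition~\ref{def:is-labelling}. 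Conditions (i) and (ii) hold since leaves keep their old labels and interior labels are sets of edges. For condition (iii) on the new root $r$: every $e = (B,h) \in \E_{12}$ has its body entirely on one side of the split $(C, \U\setminus C)$ and its head on the other; the body side is precisely one subtree ($T_1$ or $T_2$) and the head sits in the other, so the leaf-containment requirement is met. Condition (iii) on the old nodes is inherited from $\lambda_1, \lambda_2$. For condition (iv), I would check that $\{\lambda(v) \mid v \in T\}$ partitions $\U \cup \E$: the leaf labels give $\U$, the interior labels of $T_1$ give $\E(\Hp[C])$, those of $T_2$ give $\E(\Hp[\U\setminus C])$, and $\lambda(r)$ gives $\E_{12}$; by the split property these three edge sets are disjoint and union to $\E$.

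The main obstacle I anticipate is the bookkeeping in condition (iv)—specifically, confirming that the three edge classes $\E(\Hp[C])$, $\E(\Hp[\U \setminus C])$, and $\E_{12}$ are genuinely disjoint and exhaustive. This relies essentially on $C$ being a body-connected \emph{component} (maximal), which forces each body to lie wholly inside $C$ or wholly inside $\U \setminus C$; without maximality an edge could straddle the partition with its body split across both sides and be missed, breaking the partition claim. The rest of the verification is routine, as the head of a bipartite edge is a singleton and therefore trivially placeable in the opposite subtree.
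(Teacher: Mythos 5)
Your proposal is correct and follows essentially the same route as the paper: the forward direction by Proposition~\ref{prop:subset}, and the converse by grafting the two given trees under a fresh root labelled with the cross edges $\E \setminus (\E(\Hp[C]) \cup \E(\Hp[\U\setminus C]))$, using the maximality of the component $C$ to place each body wholly in one subtree and the head in the other. Your explicit check of condition \emph{(iv)} is slightly more detailed than the paper's, which simply declares it clear, but the substance is identical.
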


\begin{proof}
The only if part directly follows from Proposition \ref{prop:subset}.
Let us show the if part.
Let $C$ be a body-connected component of $\Hp$ and let $(T_1, \lambda_1)$ be a 
$\Hp[C]$-tree and $(T_2, \lambda_2)$ a $\Hp[\U \setminus C]$-tree. 
We consider a new tree $(T, \lambda)$ such that $T$ has root $r$ with left subtree $T_1$ 
and right subtree $T_2$.
As for $\lambda$, we put $\lambda(v) = \lambda_1(v)$ if $v \in T_1$, $\lambda(v) = 
\lambda_2(v)$ if $v \in T_2$ and $\lambda(r) = \{e \in \E \mid e \notin \E(\Hp[C]) \cup 
\E(\Hp[\U \setminus C]) \}$. 
In words, $\lambda(r)$ contains any edge which is not fully contained in $C$ or $\U 
\setminus C$.
It is clear that conditions \emph{(i), (ii), (iv)} of Definition \ref{def:is-labelling} 
are fulfilled for $(T, \lambda)$ as they are for 
$(T_1, \lambda_1)$, $(T_2, \lambda_2)$ and $C \cup \U \setminus C = \U$. 
Hence, we have to check \emph{(iii)}. 
Let $e = (B, h)$ be an edge in $\lambda(v)$.
If $B \cap C \neq \emptyset$, then $B \subseteq C$ since $C$ is a body-connected 
component 
of $\Hp$.
As $e$ is not an edge of $\Hp[C]$, it follows that $h \in \U \setminus C$.
Dually, if $B \cap C = \emptyset$, then $h \in C$ since $e$ is not in $\Hp[\U \setminus 
C]$.
Therefore, condition \emph{(iii)} is satisfied and $(T, \lambda)$ is a $\Hp$-tree, 
concluding the 
proof.
\end{proof}

Theorem \ref{thm:characterization} suggests a recursive algorithm which computes a 
$\Hp$-tree for $\Hp$ if it is H-decomposable. 
If $\Hp$ is reduced to a vertex $v$, we simply output a tree which is a leaf with label 
$v$.
Otherwise we compute a body-connected component $C$ of $\Hp$ whenever $\Hp$ is not 
body-connected; we label the corresponding node by the edges of $\Hp[C,\U\setminus C]$, and then we recursively call the algorithm on the 
subhypergraphs $\Hp[C]$ and $\Hp[\U \setminus C]$. 
%It remains to return a new tree with root $r$ such that its left child is a $\Hp[C]$-tree 
%and its right child a $\Hp[\U \setminus C]$-tree.
%Observe that the label of $r$ will be the edges of the splitset induced by the 
%bipartition $(C, \U \setminus C)$.
This  strategy is formalized in Algorithm \ref{alg:build-tree}, whose correctness and 
complexity are studied in Theorem \ref{thm:algo-complex}. 

\begin{algorithm}
	\KwIn{A dihypergraph $\Hp = (\U, \E)$}
	\KwOut{A $\Hp$-tree, if it exists, \csf{FAIL} otherwise}
	
	\BlankLine
	\BlankLine
	
	\If{$\Hp$ has one vertex}{
		%\textcolor{red}{ Donner un nom a ce genre d'arbre : par exemple boolean tree}
		
		%construct a binary tree $T_0$ with root $r_0$ and $\card{\U}$ leaves \;
		%create a labelling $\lambda_0$ s.t. leaves of $T_0$ are bijectively mapped on 
		%$\U$
		%and for any interior node $v \in T_0$, $\lambda_0(v) = \emptyset$ \;
		create a new leaf $r$ with $\lambda(r)$ the unique vertex in $\Hp$\; 
		return $r$ \;
	}\Else{
		compute a body-connected component $C$ of $\Hp$ \;
		\If{$\card{C} =\card{\U}$}{
			stop and return \csf{FAIL} \;
	 
		}
		\Else{
			
			%compute $\Hp[C]$ and $\Hp[\U \setminus C]$ \;
			let $r$ be a new node with $\lambda(r) = \E \setminus (\E(\Hp[C]) \cup 
			\E(\Hp[\U \setminus C]))$ \;
			$\csf{left}(r)= $ \ctt{BuildTree}$(\Hp[C])$ \;
			$\csf{right}(r)  = $ \ctt{BuildTree}$(\Hp[\U \setminus C])$ \;			
			%		let $r$ be a new node with $\lambda(r) = \E \setminus (\E(\Hp[C]) 
			%\cup \E(\Hp[\U 
			%		\setminus C]))$ \;
			%		let $\csf{left}(r) = L$, $\csf{right}(r) = R$ \;
			return $r$ \;
		}
	}
	
	\caption{\ctt{BuildTree}}
	\label{alg:build-tree}
\end{algorithm}

\begin{theorem}\label{thm:algo-complex}
Given a dihypergraph $\Hp = (\U, \E)$, Algorithm {\normalfont \ctt{BuildTree}} computes 
a $\Hp$-tree if it exists and returns {\csf{FAIL}} otherwise in polynomial time and 
space in the size of $\Hp$.
\end{theorem}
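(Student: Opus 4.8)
The plan is to establish two things for Algorithm \ctt{BuildTree}: correctness (it returns a valid $\Hp$-tree exactly when $\Hp$ is H-decomposable, and \csf{FAIL} otherwise) and a polynomial bound on its running time and space. I would prove correctness by induction on $\card{\U}$, leaning directly on Theorem \ref{thm:characterization}, which is the structural heart of the argument. The base case is immediate: when $\Hp$ has a single vertex the algorithm returns a labelled leaf, which is a $\Hp$-tree by Definition \ref{def:is-labelling}. For the inductive step I would compute a body-connected component $C$. If $\card{C} = \card{\U}$ then $\Hp$ is body-connected, so by Proposition \ref{prop:H-body-connected} it is H-indecomposable, and returning \csf{FAIL} is correct. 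Otherwise $(C, \U \setminus C)$ is a split, the recursive calls on $\Hp[C]$ and $\Hp[\U \setminus C]$ return $\Hp[C]$- and $\Hp[\U\setminus C]$-trees by the induction hypothesis (or \csf{FAIL}, which correctly propagates since by Theorem \ref{thm:characterization} H-decomposability of $\Hp$ is equivalent to that of both induced subhypergraphs), and the assembly step is exactly the construction in the proof of Theorem \ref{thm:characterization}, so the returned tree is a genuine $\Hp$-tree.

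For the complexity analysis I would first argue that each individual call, excluding the recursive subcalls, runs in polynomial time in $\card{\Hp}$. The two costly operations are computing a body-connected component $C$ and partitioning the edges into $\E(\Hp[C])$, $\E(\Hp[\U \setminus C])$, and the cross edges $\E \setminus (\E(\Hp[C]) \cup \E(\Hp[\U\setminus C]))$. Both can be done by a straightforward traversal: body-connectivity is a reachability relation on vertices induced by shared membership in edge bodies, so one can build an auxiliary graph on $\U$ (connecting two vertices when they co-occur in some body) and extract a connected component by breadth-first search, all in time polynomial in $\sum_{e \in \E} \card{B(e)} \le \card{\Hp}$. Classifying each edge then takes one pass over $\E$.

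The remaining point is to bound the total number of nodes and the aggregate work across the whole recursion. Here I would observe that every leaf of the output tree corresponds to exactly one vertex of $\U$, and since the tree is a full binary tree its number of interior nodes is $\card{\U} - 1$; hence there are $O(\card{\U})$ recursive calls in total, and the tree has $O(\card{\U})$ nodes, giving the space bound. Since each call does polynomial work in $\card{\Hp}$ and there are linearly many calls, the overall running time is polynomial in $\card{\Hp}$. I would also note that the labels $\{\lambda(v) \mid v \in T\}$ partition $\U \cup \E$, so the total size of all labels is bounded by $\card{\Hp}$, confirming that the bookkeeping does not blow up the space.

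The main obstacle I anticipate is not any single deep idea but the care needed in the complexity accounting: one must verify that the edge sets $\E(\Hp[C])$ and $\E(\Hp[\U\setminus C])$ shrink so that work is not duplicated across levels of the recursion, and that recomputing body-connected components on progressively smaller subhypergraphs does not, in aggregate, exceed a polynomial. The cleanest way to handle this is the global counting argument above, charging each node of the final tree once, rather than attempting a per-level recurrence; this sidesteps the subtlety that body-connectivity is not inherited (as noted after Proposition \ref{prop:body-connected}), which means a component extracted at one level may fragment at the next and one cannot assume a balanced split.
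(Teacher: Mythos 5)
Your proposal is correct and follows essentially the same route as the paper: induction on $\card{\U}$ using Theorem \ref{thm:characterization} for correctness (with the same case split between body-connected failure and propagated failure), and an $O(\card{\U})$ bound on the number of recursive calls times polynomial per-call work for the complexity. The only difference is cosmetic: you compute body-connected components via BFS on an auxiliary co-occurrence graph, whereas the paper uses union-find for an almost-linear per-call bound, but both suffice for the claimed polynomial bound.
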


%\todo[inline, color=alizarine]{Pas logique $\U \leq \Hp$ si $\E = \emptyset$}

%\todo[inline, color=amethyst]{To be written properly}

\begin{proof}
We first show using induction on the set of vertices $\card{\U}$ that the algorithm 
returns a $\Hp$-tree iff $\Hp$ is H-decomposable. 
Clearly, for dihypergraphs containing only one vertex $x$, the algorithm returns a 
$\Hp$-tree corresponding to a leaf with label $x$. 
Now, assume that the algorithm is correct for dihypergraphs with $\card{\U} < n$, $ 
\in \cb{N}$, and consider a dihypergraph $\Hp$ with $\card{\U} = n$.

Suppose $\Hp$ is H-decomposable. 
Then $\Hp$ is not body-connected by Proposition \ref{prop:body-connected}.
Let $C$ be a body-connected component of $\Hp$.
Inductively, the algorithm is correct for $\Hp[C]$ and $\Hp[\U \setminus C]$ since $1 
\leq \card{C} < n$.
So by Theorem \ref{thm:characterization}, $\Hp[C]$ and $\Hp[\U \setminus C]$ are  
H-decomposable.
By induction the algorithm computes a $\Hp[C]$-tree $(T_1, \lambda_1)$ and a 
$\Hp[\U \setminus C]$-tree $(T_2, \lambda_2)$. 
Therefore, the algorithm returns a tree with root $r$ whose label is $\lambda(r) = \E 
\setminus (\E(\Hp[C]) \cup \E(\Hp[\U \setminus C]))$ and children $T_1$ and $T_2$ which 
satisfies all conditions for $(T, \lambda)$ to be a $\Hp$-tree.
Thus the algorithm computes a $\Hp$-tree for all dihypergraphs that are H-decomposable. 

Now suppose $\Hp$ is not H-decomposable. We have two cases:
\begin{enumerate}
	\item If $\Hp$ is body-connected then the algorithm returns \csf{FAIL} in Line 
	\textbf{7}.
	\item If $\Hp$ is not body-connected, 
	the algorithm chooses a body-connected component $C$ with $1 \leq \card{C} < n$. 
	By Theorem \ref{thm:characterization}, either $\Hp[C]$ or $\Hp[\U \setminus C]$ is 
	H-indecomposable. 
	Thus by induction, the algorithm will return \csf{FAIL} for the input $\Hp[C]$ or 
	$\Hp[\U \setminus C]$ in Lines \textbf{11}-\textbf{12}. 
	Since the algorithm stops, the output of the algorithm is \csf{FAIL}.
\end{enumerate}
Therefore, the algorithm fails whenever the input dihypergraph $\Hp$ is H-indecomposable.
We conclude that the algorithm returns a $\Hp$-tree if and only if the input dihypergraph 
$\Hp$ is H-decomposable.

Now we show that the total time and space complexity of the algorithm are polynomial. 
The space required for the algorithm is bounded by the size of the dihypergraph and 
the size of the $\Hp$-tree.
As the size of the $\Hp$-tree is bounded by $O(\card{\Hp})$, the overall space is bounded 
by $O(\card{\Hp})$.

The time complexity is bounded by the sum of the costs of all nodes (or calls) of 
the search tree. 
The number of calls is bounded by $O(\card{\U})$, the size of the search tree.
The cost of a call is dominated by the computation of a body-connected 
component of the input $\Hp$.
For this, we use union-find data structures  in \cite{tarjan84}, which runs in almost 
linear time, \ie  $O(\card{\Hp} \cdot \alpha(\card{\Hp}, \card{\U}))$ where $\alpha(.,.)$ is the inverse 
of the Ackermann function. 
The almost linear comes from the fact that $\alpha(\card{\U}) \leq 4$ for any practical 
dihypergraph. 
Thus the total time complexity is $O(\card{\U}(  \card{\Hp} \cdot \alpha(\card{\Hp}, \card{\U}))$.
\end{proof}

It is worth noticing, that the obtained $\Hp$-tree by Algorithm \ref{alg:build-tree} depends on the choice of a body-connected component in line {\bf 5}. 
Thus, there are many possible $\Hp$-trees that represent a hierarchical decomposition of a given dihypergraph. 
Then, a natural question arises: are all {\it $\Hp$-trees equivalently interesting?} \ Figure \ref{fig:ex-sev-trees} shows   two possible $\Hp$-trees for the dihypergraph $\Hp 
= (\U, \E)$ with $\U = [8]$ and $\E = \{(12, 3), (23, 4), (34, 5), \allowbreak (56, 7), \allowbreak 
(67, 
8)\}$.

\begin{figure}[ht!]
	\centering
	\begin{subfigure}{0.4\linewidth}
		\centering
		\includegraphics[scale=0.8]{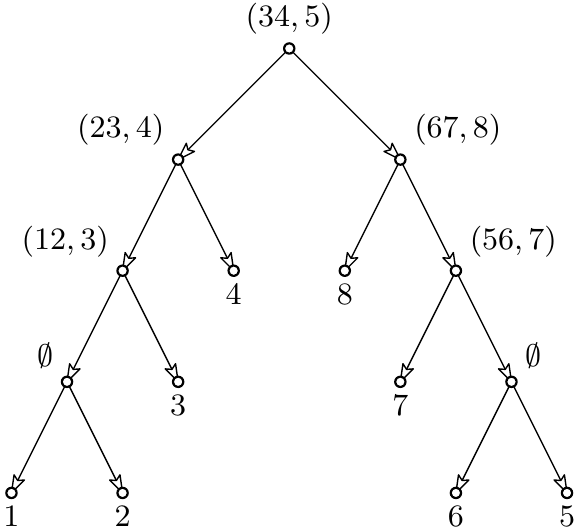}
	\end{subfigure}
	\quad
	\begin{subfigure}{0.4\linewidth}
		\centering
		\includegraphics[scale=0.8]{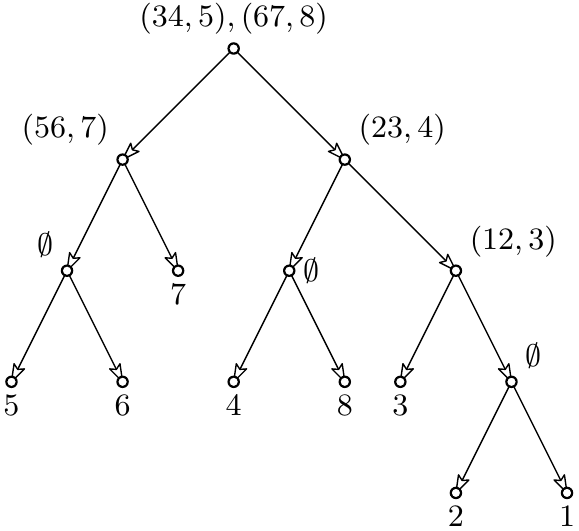}
	\end{subfigure}
	\caption{Two possible $\Hp$-trees of the same dihypergraph}
		\label{fig:ex-sev-trees}
\end{figure}

%\textcolor{red}{A voir si on la garde, si oui donner plus du details
%%
%A possible measure for the quality of a $\Hp$-tree is whether it is balanced or not, as 
%this measure is natural according to binary trees. That is, we 
%try to minimize the difference between the sizes of left and right children of each 
%nodes, and then try to optimize the sum of all nodes. In Figure \ref{fig:ex-sev-trees}, 
%the most balanced $\Hp$-tree is the second one. We are thus lead to consider the 
%following problem
%%
%\paragraph{Problem 1.} Given a dihypergraph $\Hp$, find the $\Hp$-tree which 
%is the most balanced.
%%
%\vspace{1.2em}
%%
%Observe that a naive technique would be to find a split which balances optimally the 
%bipartition at each step, independently of the subhypergraphs it induces. The 
%dihypergraph 
%given in the previous example shows however that this is not a valid strategy. Indeed 
%the 
%first 
%tree begins with splitting the ground set into two equal parts while the second tree 
%first 
%splits $\U$ into non-equal ones. Hence, one would expect the first tree to have a better 
%balancing in the end, but this is not the case. 
%}

\subsection{Extension of the H-decomposition}

As seen before, there are  dihypergraphs that cannot have a split and thus a H-decomposition into trivial hypergraphs.
Such dihypergraphs are body-connected,  and will be called  {\it irreducible H-factors} (H-factors for short) in the rest of the paper.  
Now  we describe a slight modification of Algorithm \ref{alg:build-tree} to obtain a H-decomposition of dihypergraphs into H-factors. Instead of returning \csf{FAIL} in  line {\bf 7} in  Algorithm  {\normalfont \ctt{BuildTree}},
we replace it  by the following:

\vspace{0.8em}

\begin{tabular}{l l}
\textbf{7'} & create a new leaf $r$ with $\lambda(r) = \E$; \\
 & return $r$;

\end{tabular}

\vspace{0.8em}

Figure \ref{fig:M3-factors} illustrates the H-decomposition of a dihypergraph, where the leftmost leaf corresponds to a H-factor which is not trivial.
\begin{figure}[h!]
	\centering 
	\includegraphics[scale=0.8]{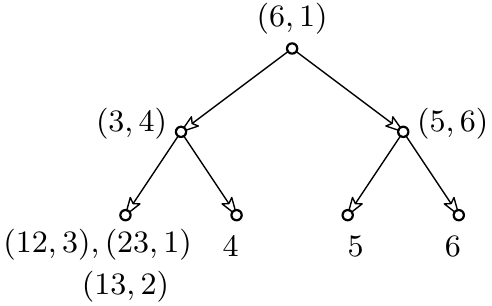}
	\caption{H-decomposition into H-factors}
	\label{fig:M3-factors}
\end{figure}

Now, any dihypergraph has a H-decomposition into H-factors, and then it can be applied to 
any objects encoded by dihypergraphs, as we will show for closure systems in the next 
section.

\section{H-decomposition of a closure system into H-factors}
\label{sec:clos}
%%!TEX root = ./main.tex
Decompositions of closure systems or 
lattices has been widely studied either from the lattice itself \cite{gratzer2011lattice, 
ganter2012formal}, from a context\cite{ganter1999decompositions, 
viaud2015subdirect} or from the database aspect
\cite{libkin1993direct, demetrovics1992functional}.

Decomposition of closure systems is of interest 
for  many applications in Formal Concept Analysis (\cite{ganter2012formal, 
viaud2015subdirect, kengue2005parallel}) such as social networks and datamining.  Closure 
systems are usualy 
represented by a binary matrix, 
also known as context \cite{ganter1999decompositions, 
wild2017joy}.
In this section, we consider closure systems represented by dihypergraphs, see 
\cite{wild2017joy,  ausiello2017directed}, and show that the H-decomposition introduced in the previous section can be applied  to  closure system decomposition.  
%First notice that  Algorithm \ctt{BuildTree} (line {\bf 7}) stops whenever it is  called 
%with a non trivial body-connected dihypergraph as parameter. This means that the 
%dihypergraph contains a subhypergraph which is body-connected. This observation leads us 
%to define a hierarchical decomposition of dihypergraph into body-connected 
%subhypergraphs. 
%

%With a slightly generalized $\Hp$-tree structure
%In the above section we proposed a way to hierarchically decompose a dihypergraph 
%in a $\Hp$-tree. In this section we consider dihypergraphs as a representation for
%closure systems, see \cite{wild2017joy, ausiello2017directed}. First we discuss without 
%proofs the outline of a general algorithm 
%to generate a closure system based on a dihypergraph using the $\Hp$-tree 
%structure. Then we show that particular closure systems, known as Tamari lattices 
%(see \cite{geyer1994tamari, markowsky1992primes}) can be represented by a H-decomposable 
%dihypergraph.

We first recall some definitions for closure systems and lattice theory. The reader can 
refer to \cite{gratzer2011lattice} for a thorough introduction to the topic. A partially 
ordered set $\Lt = (L, \leq)$ is a reflexive, anti-symmetric and transitive binary 
relation 
$\leq$ on a set $L$. For $x, y \in \Lt$, we say that $x$ and $y$ are \emph{comparable} if 
$x \leq y$ or $y \leq x$, and \emph{incomparable otherwise}. An upper bound of $x, y$ is 
an element $u \in \Lt$ such that $x \leq u$, $y \leq u$. If for any upper bound $u' \neq 
u$, $u \leq u'$, then $u$  is the \emph{least upper bound} of $x, y$, written $x \jn y$. 
Lower bounds and the greatest lower bound $x \mt y$ are defined dually. We say that $\Lt$ 
is a \emph{lattice} if for any $x, y \in \Lt$, $x \jn y$ and $x \mt y$ are well defined. 
A {\it  meet-sublattice} $\Lt'$ of $\Lt$ is a subset of elements of $\Lt$ such that for any $x,y\in \Lt'$ $x\wedge y\in \Lt'$. A meet-sublattice $\Lt'$ of $\Lt$ is a {\it sublattice} of $\Lt$ if $x\vee y\in \Lt'$.
Among elements of $\Lt$, we say that $x$ is a \emph{join-irreducible} if for any $y, z 
\in \Lt$, $x = y \jn z$ implies $x = y$ or $x = z$. The set of join-irreducible elements 
of $\Lt$ is denoted by $\J(\Lt)$.

A closure system $\cs$ on a finite set $\U$ is a family of subsets of $\U$ which contains 
$\U$ and is closed under intersection, that is for any $F_1, F_2$ in the family $\cs$, 
$F_1 \cap F_2$ also belongs to $\cs$. A subset $F$ of $\U$ which is in $\cs$ is called a 
\emph{closed set}.  It is well known, that a closure system with partial ordering by set containment
is always  a lattice. Dually, to any lattice $\Lt$ is associated
a closure system on its 
join-irreducible elements. The lattice $\Lt$ is isomorphic to the closure system $\{J_x\mid  x\in \Lt\}$ when ordered by set containment, where  $J_x = \{ j \in \J(\Lt) \mid j \leq x\}$.

The {\it projection} of a closure system $\cc{F}$ over a subset $U \subseteq \U$, named here {\it trace} and noted  $\cs \colon U$,  is the closure system we obtain by intersecting each $F \in \cs$ with $U$, \ie  $\cs \colon U = \{ F \cap U \mid F \in \cs\}$. The trace $\cs \colon U$ is always a sublattice of the lattice $(\cs,\subseteq)$.
The product of two closure systems $\cs_1, \cs_2$ is the pairwise union of their closed sets, that is $\cs_1 \times \cs_2 = \{F_1 \cup F_2 \mid F_1 \in \cs_1, F_2 \in \cs_2 \}$.

%The size of $\cs$ is its number of closed sets.

%It is given by the following transformation: for any $x \in 
%\Lt$, replace $x$ by the set of join-irreducible elements that are below $x$. That is, 
%we replace $x$ by $J_x = \{ j \in \J(\Lt) \mid j \leq x\}$. 

%A closure system can be 
%encoded using a dihypergraph,  where a subset $F$ of $\U$ is closed, if for any 
%edge $(B, h), B\subseteq F$ implies $h\in F$, \ie it is a fixpoint of the 
%forward chaining algorithm on $\Hp$. 

\medskip

%Another well known representation of closure 
%systems is a binary relation (or context) (see \cite{ganter1999decompositions, 
%wild2017joy}).

%\subsection{H-decomposition of a closure system into H-factors}
%Based on the fact that for any dihypergraph is associated a closure system,  we conclude that the  H-decomposition of a dihypergraph induces a H-decomposition of the associated closure system.

First, we recall the {\it forward chaining } method for computing the closure system from its associated dihypergraph. Let $\Hp$ be a dihypergraph and $X\subseteq \U$, we construct a chain of subsets of $\U$ $X=X_0\subset X_1\subset ...\subset X_k=X^{\Hp}$, where  $X_i=X_{i-1}\cup \{h \mid B\subseteq X_{i-1}, (B,h)\in \E\}$ with  $i>0$. The subset $X^{\Hp}$ is called a fixed point or a closed set. Indeed, a subset $F$ of $\U$ is closed, if for any edge $(B, h), B\subseteq F$ implies $h\in F$. The set of all closed sets $\mathcal{F}_{\Hp}=\{X^{\Hp} \mid  X\subseteq \U\}$ is a closure system. Notice, that there are many dihypergraphs that lead to the same closure system.

Naturally, we wish to extend the H-decomposition of a dihypergraph $\Hp$ to a decomposition of the closure system $\mathcal{F}_{\Hp}$, also called H-decomposition. The H-decomposition of the closure system $\mathcal{F}_{\Hp}$ is obtained from the H-decomposition of the dihypergraph $\Hp$, where the label of  a node of its $\Hp$-tree is replaced by  the closure system associated to the dihypergraph induced by  its subtree. The closure systems corresponding to leaves are the irreducible H-factors of the input closure system.

Figure  \ref{fig:lattice-tree} illustrates the H-decomposition of the closure system associated to the H-decomposition of the  dihypergraph  in Figure \ref{fig:M3-factors}.

\begin{figure}[h!]
	\centering 
	\includegraphics[scale=0.8]{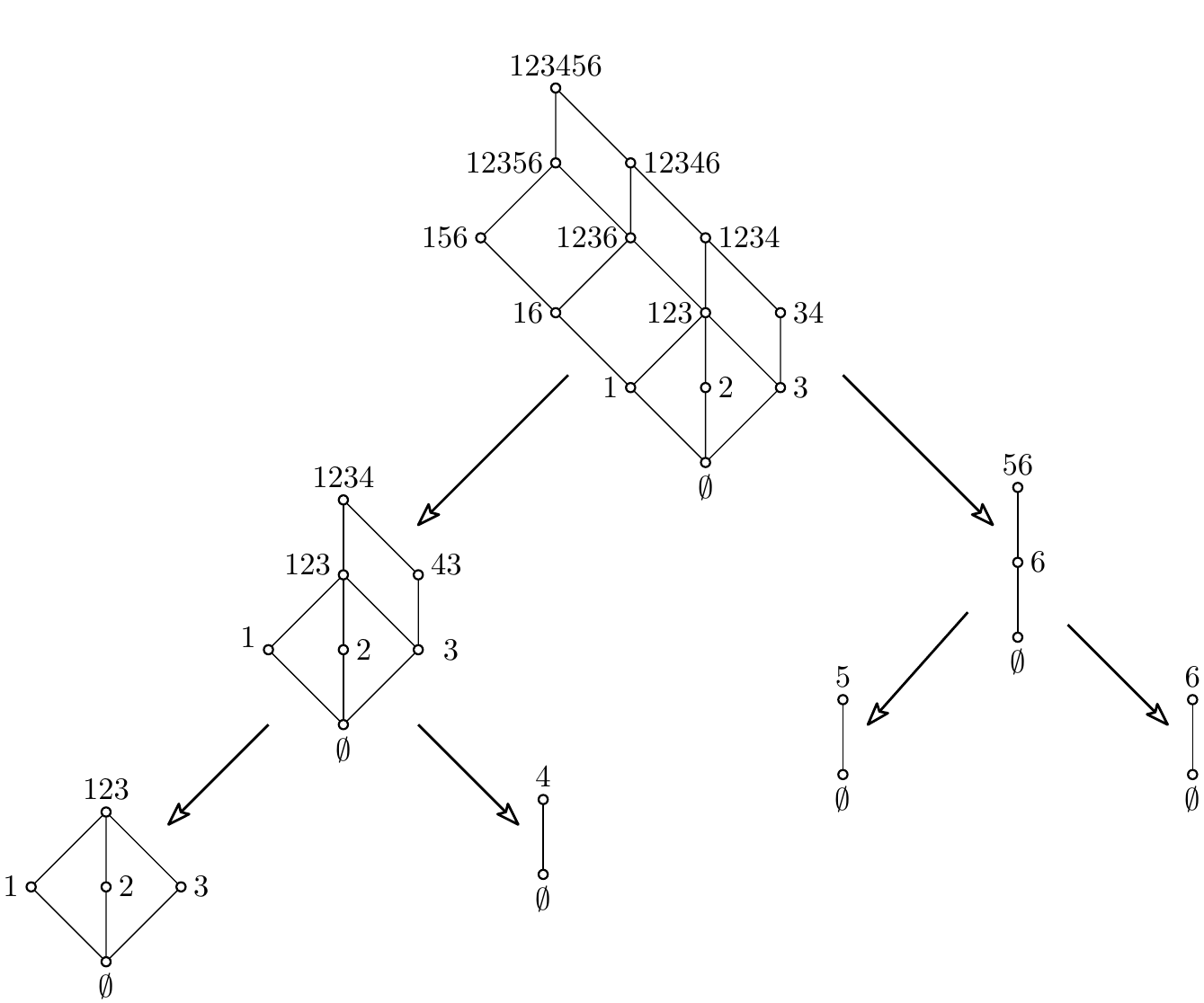}
	\caption{The H-decomposition of the closure system corresponding to the dihypergraph in  Figure 
		\ref{fig:M3-factors}}
	\label{fig:lattice-tree}
\end{figure}

Next, we  study properties of the three closure systems corresponding to the three subhypergraphs induced by a split of the dihypergraph. 

\begin{theorem}  \label{thm:trace}Let  $(\U_1, \U_2)$ be a split of \ $\Hp$,  $\cc{F}_1$ and $\cc{F}_2$ the closure systems corresponding to $\Hp[\U_1]$ and $\Hp[\U_2]$ respectively. Then,
 \begin{enumerate}
 \item If  $F\in \cc{F}_{\Hp}$ then  $F_i=F\cap \U_i\in \cc{F}_i, i=\{1,2\}$. Moreover, $\cc{F}_{\Hp}\subseteq \cc{F}_1\times \cc{F}_2$
 \item If $\Hp[\U_1,\U_2]$ has no edge then $\cc{F}_{\Hp}=\cc{F}_1\times \cc{F}_2$.
 \item If every edge $(B,h)$ of $\Hp[\U_1,\U_2]$, we have $B\subseteq \U_1$    then  $\cc{F}_{\Hp}:\U_1=\cc{F}_1$ and $\cc{F}_{\Hp}:\U_2=\cc{F}_2$. 
  \item If every edge $(B,h)$ of $\Hp[\U_1,\U_2]$, we have $B\subseteq \U_2$    then  $\cc{F}_{\Hp}:\U_2=\cc{F}_2$ and $\cc{F}_{\Hp}:\U_1=\cc{F}_1$. 
  %\item If  $\Hp[\U_1,\U_2]$ is acyclic  then  $\cc{F}_i, i=\{1,2\}$  are traces of \ $\cc{F}_{\Hp}$. 
   % \item If  $\Hp[\U_1,\U_2]$ is general  then $\cc{F}_{\Hp}\subseteq \cc{F}_1\times \cc{F}_2$. 
 \end{enumerate}
\end{theorem}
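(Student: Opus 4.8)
The plan is to work throughout with the closed-set characterization recalled just before the statement: a subset $F \subseteq \U$ lies in $\cc{F}_{\Hp}$ if and only if $B \subseteq F$ implies $h \in F$ for every edge $(B,h) \in \E$, and the same criterion applies to $\cc{F}_1, \cc{F}_2$ relative to the edge sets $\E(\Hp[\U_1])$ and $\E(\Hp[\U_2])$. This reduces every claim to a closedness check. Since the hypotheses of items $(3)$ and $(4)$ are interchanged by swapping the roles of $\U_1$ and $\U_2$, it suffices to establish $(1)$, $(2)$ and $(3)$, with $(4)$ following by symmetry.

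For item $(1)$, I would take $F \in \cc{F}_{\Hp}$ and fix $i \in \{1,2\}$. To see $F_i = F \cap \U_i \in \cc{F}_i$, pick any $(B,h) \in \E(\Hp[\U_i])$ with $B \subseteq F_i$; then $B \subseteq F$, so $h \in F$ by closedness of $F$, and $h \in \U_i$ since the edge lies in $\Hp[\U_i]$, whence $h \in F_i$. Writing $F = (F\cap\U_1) \cup (F\cap\U_2)$ with the two pieces in $\cc{F}_1$ and $\cc{F}_2$ exhibits $F$ as a member of $\cc{F}_1 \times \cc{F}_2$, giving the inclusion $\cc{F}_{\Hp} \subseteq \cc{F}_1 \times \cc{F}_2$. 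For item $(2)$ only the reverse inclusion is new: given $F_1 \in \cc{F}_1$ and $F_2 \in \cc{F}_2$, I check that $F_1 \cup F_2$ is closed in $\Hp$. As $\Hp[\U_1,\U_2]$ has no edge, every $(B,h) \in \E$ lies entirely in $\U_1$ or entirely in $\U_2$; if $B \subseteq F_1 \cup F_2$ then the edge is internal to the side containing $B$, its body is contained in the corresponding $F_j$, and closedness of $F_j$ forces $h \in F_j \subseteq F_1 \cup F_2$.

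For item $(3)$ the inclusions $\cc{F}_{\Hp}:\U_i \subseteq \cc{F}_i$ are immediate from item $(1)$, so I would concentrate on the reverse inclusions, where the asymmetry of the bipartite edges is essential: by hypothesis every bipartite edge $(B,h)$ has $B \subseteq \U_1$ and hence $h \in \U_2$, so no edge carries its body inside $\U_2$ while placing its head in $\U_1$. For $\cc{F}_{\Hp}:\U_1 = \cc{F}_1$, given $F_1 \in \cc{F}_1$ I take $F = F_1^{\Hp}$ and argue $F \cap \U_1 = F_1$: any edge whose head lies in $\U_1$ must be internal to $\U_1$, for otherwise it would be a bipartite edge with head in $\U_1$, contradicting the hypothesis. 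Hence, by induction on the forward-chaining rounds, the $\U_1$-part never grows beyond $F_1$, since a newly added head in $\U_1$ comes from a $\Hp[\U_1]$-edge whose body already sits in the $\Hp[\U_1]$-closed set $F_1$. Thus $F_1 \in \cc{F}_{\Hp}:\U_1$. For $\cc{F}_{\Hp}:\U_2 = \cc{F}_2$, given $F_2 \in \cc{F}_2$ I claim $F_2$ is itself closed in $\Hp$, so that $F_2 = F_2 \cap \U_2$ realises it as a trace: any $(B,h)$ with $B \subseteq F_2 \subseteq \U_2$ has its (non-empty) body inside $\U_2$, hence is internal to $\U_2$ rather than bipartite, and closedness of $F_2$ in $\Hp[\U_2]$ gives $h \in F_2$.

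The main obstacle is precisely this last step, the reverse inclusion $\cc{F}_2 \subseteq \cc{F}_{\Hp}:\U_2$ in item $(3)$: here the one-directionality of the bipartite edges (information flows only from $\U_1$ into $\U_2$, never back) is indispensable, guaranteeing that a $\Hp[\U_2]$-closed set receives no additional forcing from the $\U_1$-side and is therefore genuinely $\Hp$-closed. This argument relies on bodies being non-empty, so that no edge with empty body can fire from within $F_2$, pull a vertex of $\U_1$ into the closure, and then trigger a bipartite edge that enlarges the $\U_2$-part beyond $F_2$; I would make this convention explicit, as it is exactly the hinge of the reverse inclusion. Item $(4)$ is then obtained from item $(3)$ verbatim after interchanging $\U_1$ and $\U_2$.
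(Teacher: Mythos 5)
Your proposal is correct and follows essentially the same route as the paper's own proof: item (1) by a direct closedness check, item (2) by verifying that $F_1\cup F_2$ is closed in $\Hp$, item (3) by tracking the forward chaining $F_1^{\Hp}$ for the $\U_1$-trace and observing that any $F_2\in\cc{F}_2$ is already closed in $\Hp$ for the $\U_2$-trace, and item (4) by symmetry. Your explicit remark that bodies must be non-empty for the $\cc{F}_2\subseteq\cc{F}_{\Hp}\colon\U_2$ step is a small point of care the paper leaves implicit, but it does not change the argument.
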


\begin{proof} Consider a split $(\U_1, \U_2)$  of \ $\Hp$,  $\cc{F}_1$ and $\cc{F}_2$ the closure systems corresponding to $\Hp[\U_1]$ and $\Hp[\U_2]$. 
We will prove \textit{(i)}, \textit{(iii)} and \textit{(ii)}. Item \item{(iv)} is similar to \textit{(iii)}.
\begin{enumerate}
\item[\textit{(i)}] Let $F\in \cc{F}_{\Hp}$ with  $F_i=F\cap \U_i$ and $(B,h)$ an edge of $\Hp[\U_i]$ for some $i\in\{1,2\}$. Suppose $B\subseteq F_i$ and $h\not \in F_i$. Then we also have  $B\subseteq F$ and $h\not \in F$ which contradicts that $F\in \cc{F}_{\Hp}$, since $(B,h)$ is an edge of $\Hp$.
\item[\textit{(iii)}] Without loss of generality, we prove the case for $i=1$. Let $F\in \cc{F}_1$, we show that  $F^{\Hp}$ (the forward chaining applied to $F$ in $\Hp$) satisfies  $F^{\Hp}\cap \U_1= F$. %Clearly, if $\Hp[\U_1,\U_2]$ has no edge then $F^{\Hp}=F$, and it is done. 
Let $(B,h)$ an edge of $\Hp$. We distinguish $3$ cases:
\begin{enumerate}
\item if $B\subseteq \U_2$ then $B\not \subseteq F$. Thus the edge  $(B,h)$ has no effect in the forward chaining.
\item if $B\subseteq \U_1$ and  $h\in \U_1$  then  $B\subseteq F$ implies  $F$ contains $h$ since it is closed in $\Hp[\U_1]$.
\item if $B\subseteq \U_1$ and  $h\in \U_2$  then  $(B,h)$ is an edge of $\Hp[\U_1,\U_2]$. Then there is no edge outside $\Hp[\U_1]$ with head in $\U_1$, and thus the forward chaining cannot add an element from $\U_1$.
\end{enumerate}
So $F$ is in the trace of $\cc{F}_{\Hp}$ over $\U_1$ and $\cs_1 \subseteq \cs \colon \U_1$. 
The reverse inclusion is true by the proof of \textit{(i)}.
As for $\cs_2$, observe that $\cs_2 \subseteq \cs$ since there is no edge $(B, h)$ in $\Hp[\U_1, \U_2]$ such that $B \subseteq \U_2$.
%, since the projection of any closed set $F$ over a set $S$ is closed set of $\Hp[S]$.
%We conclude that 
\item[\textit{(ii)}] Since $\Hp$ has no edge, it satisfies \textit{(iii)}. We deduce that $\cc{F}_{\Hp} \colon \U_1=\cc{F}_1$ and $\cc{F}_{\Hp} \colon \U_2=\cc{F}_2$. Thus $\cc{F}_{\Hp}\subseteq \cc{F}_1\times \cc{F}_2$. For the other inclusion, let $F_1\in \cc{F}_1$ and $F_2\in \cc{F}_2$. We show that $F_1\cup F_2\in \cc{F}_{\Hp}$. Let $(B,h)$ be an edge of $\Hp$ such that   $B\subseteq F_1\cup F_2$. Since $\Hp[\U_1,\U_2]$ has no edge, then $(B,h)$ is an edge of  $\Hp[\U_1]$  or an edge of  $\Hp[\U_2]$. In any case $F_1$ or $F_2$ contains $h$. We conclude that  $F_1\cup F_2\in \cc{F}_{\Hp}$.

%\item[] $(v)$ Let $F\in \cc{F}_{\Hp}$ with  $F_1=F\cap \U_1$ and  $F_2=F\cap \U_2$. 

\end{enumerate}
\end{proof}

According to Theorem \ref{thm:trace} \textit{(i)}, any closure system is a subset of the product of its H-factors closure systems. So  the  idea is to compute in parallel $\cc{F}_1$ and $\cc{F}_2$ for every split $(\U_1, \U_2)$ in the $\Hp$-tree, and then use the dihypergraph  $\Hp[\U_1,\U_2]$ to compute  $\cc{F}_{\Hp}$. But this strategy is expensive, since the size of $\cc{F}_1$ and $\cc{F}_2$ may be exponential in the size of $\cc{F}_{\Hp}$. This is the case, when the subhypergraphs $\Hp[\U_1]$ and  $\Hp[\U_2]$ have no edge, and the edges of  $\Hp[\U_1,\U_2]$ are as follows: $\E_1\cup \E_2 = \{ (vv', u)\in {\U_1}^2\times U_2  \mid v \neq v' \} \cup  \{ (uu', v)\in {\U_2}^2\times \U_1  \mid v \neq v' \}$.
Then,  $\cc{F}_1=2^{\U_1}$ and  $\cc{F}_2=2^{\U_2}$ which are exponential sizes, whereas the closure system $\cc{F}_{\Hp}$  has $|\U_1|\times |\U_2|+ |\U|+2$ elements, namely $\emptyset$, $\U$,  any 
singleton element $v \in \U$, and any pair $vu\in  \U_1\times \U_2$. 
It is worth noticing that this combinatorial explosion cannot happen whenever $\cc{F}_1$ and $\cc{F}_2$ are traces. In this case, the size of the closure system $\cc{F}_{\Hp}$ is at most twice the maximum size of  $\cc{F}_1$ and $\cc{F}_2$.

\begin{figure}[ht!]
	
	\centering
	\begin{subfigure}{0.2\linewidth}
		\centering
		\includegraphics[scale=0.8]{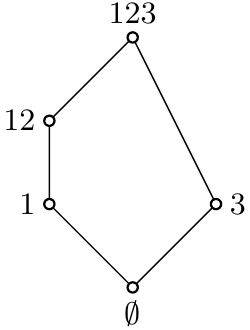}
	\end{subfigure}
	\begin{subfigure}{0.2\linewidth}
		\centering
		\includegraphics[scale=0.8]{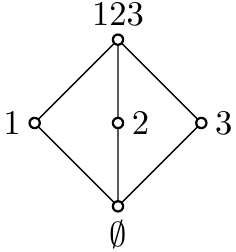}
	\end{subfigure}
	\begin{subfigure}{0.2\linewidth}
		\centering 
		\includegraphics[scale=0.8]{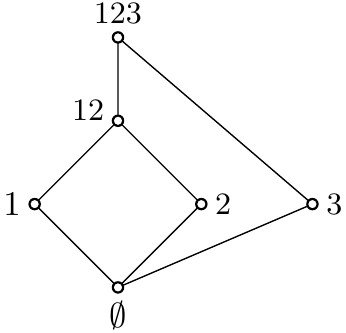}
	\end{subfigure}
	\begin{subfigure}{0.2\linewidth}
		\centering 
		\includegraphics[scale=0.8]{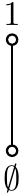}
	\end{subfigure}\\
	(a)~~~~~~~~~~~~~~~~~~~~~~~~~~ (b)~~~~~~~~~~~~~~~~~~~~~~~~~~ (c) 
	~~~~~~~~~~~~~~~~~~~~~~~~~~~~(d)
	\caption{}
	\label{fig:H-indecomposable-factors}
\end{figure}

However, the H-decomposition allows us to go further in the decomposition of closure systems and lattices approaching the most famous Birkhoff's  theorem of the subdirect decomposition which states  "{\it Every algebra is a subdirect product of its subdirectly irreducible homomorphic}".  
The interpretation in our approach is  that "Every closure system is a sublattice of the direct product of irreducible traces". Irreducible traces are closure systems that cannot be obtained as a sublattice of the direct product of its traces.
Consider the closure system $\cc{F}_{\Hp}$ in Figure \ref{fig:H-indecomposable-factors}(a) encoded by the unique dihypergraph  $\Hp=(\{1,2,3\},\{(2,1), (13,2)\})$. It is known that it cannot be obtained as a sublattice of the direct product of  traces.  Clearly $\Hp$ is not body-connected and thus $\U_1=\{1,3\}$ et $\U_2=\{2\}$ is the unique split where $\cc{F}_1=\{\emptyset,1,3,13\}$ and $\cc{F}_2=\{\emptyset, 2\}$ are traces. But $\cc{F}_{\Hp}$ is not a sublattice of  $\cc{F}_1\times \cc{F}_2$, since $(1,3) \in \cc{F}_1\times \cc{F}_2$ the upper bound of $1$ and $3$ is not preserved in $\cc{F}_{\Hp}$.

Figure \ref{fig:H-indecomposable-factors}(b), (c) and (d)  are  subdirectly irreducible and  H-factors too.

We conclude the paper with the following.

\begin{corollary} Every closure system is a meet-sublattice of the direct product of its H-factors. 
\end{corollary}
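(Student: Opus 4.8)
The plan is to exhibit the natural coordinate inclusion and to verify that it preserves meets, with Theorem~\ref{thm:trace}\textit{(i)} as the single substantive ingredient. Let $\U_1, \dots, \U_k$ be the vertex sets of the H-factors produced by the decomposition of $\Hp$ (the leaves of the H-tree obtained from the modified Algorithm~\ref{alg:build-tree}); they partition $\U$, and $\cc{F}_j = \cc{F}_{\Hp[\U_j]}$ is the closure system of the $j$-th H-factor. Recall that, following the definition of the product, the direct product $\cc{F}_1 \times \cdots \times \cc{F}_k = \{\, \bigcup_{j=1}^{k} F_j \mid F_j \in \cc{F}_j \,\}$ is again a closure system on $\U$, and that, because the $\U_j$ are pairwise disjoint, its meet is plain intersection: $(\bigcup_j F_j) \cap (\bigcup_j F_j') = \bigcup_j (F_j \cap F_j')$.

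First I would prove the key membership claim: for every $F \in \cc{F}_{\Hp}$ and every $j$, the trace $F \cap \U_j$ lies in $\cc{F}_j$. This is exactly the generalization of Theorem~\ref{thm:trace}\textit{(i)} from a single split to the leaves of the decomposition, and it can be obtained directly: any edge $(B,h)$ of $\Hp[\U_j]$ is an edge of $\Hp$ with $B, h \subseteq \U_j$, so $B \subseteq F \cap \U_j$ forces $B \subseteq F$, whence $h \in F$ since $F$ is closed in $\Hp$, and $h \in \U_j$ gives $h \in F \cap \U_j$; thus $F \cap \U_j$ is closed in $\Hp[\U_j]$. (Alternatively one iterates Theorem~\ref{thm:trace}\textit{(i)} down the $\Hp$-tree.) Since the $\U_j$ partition $\U$, we have $F = \bigcup_{j=1}^{k}(F \cap \U_j)$ with each term in $\cc{F}_j$, hence $F \in \cc{F}_1 \times \cdots \times \cc{F}_k$. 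This establishes the inclusion $\cc{F}_{\Hp} \subseteq \cc{F}_1 \times \cdots \times \cc{F}_k$ as families of subsets of $\U$.

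It then remains to check that this inclusion is meet-preserving, which is immediate from the coincidence of both meet operations with set intersection. The meet in $\cc{F}_{\Hp}$ is intersection, since a closure system is closed under intersection; and, as noted above, the meet in the product is intersection as well. Hence $\cc{F}_{\Hp}$ is a subfamily of $\cc{F}_1 \times \cdots \times \cc{F}_k$ closed under the product's meet, that is, a meet-sublattice, which is precisely the claim.

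The step I expect to carry all the weight is the membership claim $F \cap \U_j \in \cc{F}_j$; everything after it is bookkeeping about disjoint unions. I would also stress what the statement deliberately does not assert: the inclusion need not preserve joins, so $\cc{F}_{\Hp}$ is in general only a \emph{meet}-sublattice and not a sublattice of the product. The $M_3$-type example discussed before the corollary — where the join of two incomparable closed sets in the product is not reflected in $\cc{F}_{\Hp}$ — shows that no join-analogue of Theorem~\ref{thm:trace} can be expected, so restricting the conclusion to meets is essential and is exactly what makes the argument go through using only Theorem~\ref{thm:trace}\textit{(i)}.
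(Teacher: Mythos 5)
Your proof is correct and follows the same route as the paper, which simply cites Theorem~\ref{thm:trace}\textit{(i)} together with closure under intersection; you have merely spelled out the details (the direct verification that $F \cap \U_j \in \cc{F}_j$ for each H-factor, and the observation that both meets coincide with set intersection) that the paper leaves implicit. No gap.
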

\begin{proof} This follows from Theorem \ref{thm:trace} \textit{(i)} and the fact that a closure system is closed under intersection.
\end{proof}

%In the future, we are interested in 

%
%
%
%Prerequisite : (w.l.o.g.) for any edge $(A, B)$ of $\Hp[\U_1, \U_2]$, $A = A^{\Hp[\U_1]}$
%and $B = B^{\Hp[\U_2]}$.
%
%\begin{theorem}
%Let $\Hp$ be a dihypergraph and $(\U_1, \U_2)$ a split. Then $\cs_{\Hp} \colon \U_1 =
%\cs_1$ and $\cs_{\Hp} \colon \U_2 = \cs_{2}$ if and only if for any edge $(A, B)$ of
%$\Hp[\U_1, \U_2]$, we have $(B^{\Hp[\U_1, \U_2]} \setminus B)^{\Hp[\U_1]} \subseteq A$
%(or dually if $A \subseteq \U_2$).
%
%\end{theorem}
%\input{tamari}
%\section{Optimization}
%\input{optimization}

%\subsection{Applications to known lattices}
%\input{applications} 
%
%\section{Conclusion}
%\label{sec:concl}
%\input{conclusion}
%
%\nocite{*}

\bibliographystyle{alpha}
\bibliography{Bib-DH}
\end{document}